\def\ps@pprintTitle{%
	\let\@oddhead\@empty
	\let\@evenhead\@empty
	\def\@oddfoot{}%
	\let\@evenfoot\@oddfoot}
\newtheorem{prop}{\textbf{Proposition}}
\def\argmax{\operatorname{argmax} \displaylimits}
\def\CGM{\mathrm{CGM}}
\def\conv{\mathrm{conv}}
\def\d{\delta}
\def\e{\epsilon}
\def\E{\mathbb{E}}
\def\EM{\mathrm{EM}}
\def\F{\mathcal{F}}
\def\FS{\mathrm{FS}}
\def\G{\mathcal{G}}
\def\ISIG{\mathrm{ISIG}}
\def\L{\Lambda}
\def\l{\lambda}
\def\TSE{\mathrm{TSE}}
\def\R{\mathbb{R}}
\def\s{\sigma}
\def\th{\theta}
\def\Th{\Theta}
\def\T{\top}
\newcommand{\bel}{\begin{eqnarray}\label}
\newcommand{\eel}{\end{eqnarray}}
\newcommand{\bes}{\begin{eqnarray*}}
\newcommand{\ees}{\end{eqnarray*}}
\newcommand{\bei}{\begin{itemize}}
\newcommand{\beiftnt}{\begin{itemize}\footnotesize}
\newcommand{\eei}{\end{itemize}}
\def\benu{\begin{enumerate}}
\def\eenu{\end{enumerate}}
\def\argmax{\mathop{\rm arg\, max}}
\def\real{{\mathbb{R}}}
\def\R{{\real}}
\def\E{{\mathbb{E}}}
\def\G{{\mathbb{G}}}
\def\P{{\mathbb{P}}}
\def\complex{\mathop{{\rm I}\kern-.58em\hbox{\rm C}}\nolimits}
\def\muhat{\widehat{\mu}}
\journal{Computational Statistics \& Data Analysis}
\begin{document}

\begin{frontmatter}

%% Title, authors and addresses

%% use the tnoteref command within \title for footnotes;
%% use the tnotetext command for theassociated footnote;
%% use the fnref command within \author or \address for footnotes;
%% use the fntext command for theassociated footnote;
%% use the corref command within \author for corresponding author footnotes;
%% use the cortext command for theassociated footnote;
%% use the ead command for the email address,
%% and the form \ead[url] for the home page:
%% \title{Title\tnoteref{label1}}
%% \tnotetext[label1]{}
%% \author{Name\corref{cor1}\fnref{label2}}
%% \ead{email address}
%% \ead[url]{home page}
%% \fntext[label2]{}
%% \cortext[cor1]{}
%% \address{Address\fnref{label3}}
%% \fntext[label3]{}

\title{Approximate Nonparametric Maximum Likelihood for Mixture Models: A Convex Optimization Approach to Fitting Arbitrary Multivariate Mixing Distributions}
%% use optional labels to link authors explicitly to addresses:
%% \author[label1,label2]{}
%% \address[label1]{}
%% \address[label2]{}

\author{Long Feng\fnref{label1}}
%% \ead{email address}
\author{Lee H. Dicker}
\fntext[label1]{Correspondence to: 110 Frelinghuysen Rd, Piscataway, NJ 08854\\
E-mail addresses: long.feng@rutgers.edu (Long Feng), ldicker@stat.rutgers.edu (Lee H. Dicker)} 
\address{Department of Statistics and Biostatistics, Rutgers University}

\begin{abstract}
Nonparametric maximum likelihood (NPML) for mixture models is a
technique for estimating mixing distributions that has a long and
rich history in statistics going back to the 1950s, and is closely
related to empirical Bayes methods.  Historically, NPML-based
methods have been considered to be relatively impractical because of
computational and theoretical obstacles.  However, recent work
focusing on approximate NPML methods suggests that these methods may
have great promise for a variety of modern applications.  Building
on this recent work, a class of flexible, scalable, and
easy to implement approximate NPML methods is studied for problems with
multivariate mixing distributions.  Concrete guidance on
implementing these methods is provided, with theoretical and empirical support;
topics covered include identifying the support set of the mixing
distribution, and comparing algorithms (across a variety of metrics)
for solving the simple convex optimization problem at the core of
the approximate NPML problem.  Additionally, 
%we illustrate the methods' performance in 
three diverse real data applications are studied to illustrate the methods' performance: (i) A baseball
data analysis (a classical example for empirical Bayes methods), (ii) high-dimensional
microarray classification, and (iii) online prediction of
blood-glucose density for diabetes patients.  Among other things,
the empirical results demonstrate the relative effectiveness of using multivariate (as opposed to univariate) mixing distributions for NPML-based approaches.

\end{abstract}

\begin{keyword}
Nonparametric maximum likelihood \sep Kiefer-Wolfowitz estimator \sep Multivariate mixture models	\sep Convex optimization
%% keywords here, in the form: keyword \sep keyword

%% PACS codes here, in the form: \PACS code \sep code

%% MSC codes here, in the form: \MSC code \sep code
%% or \MSC[2008] code \sep code (2000 is the default)

\end{keyword}

\end{frontmatter}

%% \linenumbers

%% main text
\section{Introduction}\label{sec:intro}

Consider a setting where we have iid observations from a mixture
model.  More specifically, let $G_0$ be a probability distribution on
$\mathcal{T} \subseteq \R^d$ and let $\{
F_0(\cdot\mid\th)\}_{\th \in \mathcal{T}}$ be a family of probability
distributions on $\R^n$ indexed by
the parameter $\th \in \mathcal{T}$.  Throughout the paper, we assume
that $\mathcal{T}$ is closed and convex.  Assume that $X_1,...,X_p
\in \R^n$ are observed iid random variables and that $\Th_1,...,\Th_p \in
\R^d$ are corresponding iid latent variables, which satisfy
\begin{equation}\label{MM}
X_j\mid\Th_j \sim F_0(\cdot\mid\Th_j) \mbox{ and } \Th_j \sim G_0.  
\end{equation}
In \eqref{MM}, it may be the case that $F_0(\cdot\mid\th)$ and 
$G_0$ are both known (pre-specified) distributions; more frequently,
this is not the case.  In this paper, we will study problems where the mixing distribution $G_0$ is
unknown, but will assume
$F_0(\cdot\mid\th)$ is known throughout.  Problems like this arise in
applications throughout statistics, and various solutions have been proposed.
The distribution $G_0$ can be modeled parametrically, which leads to
hierarchical modeling and parametric empirical Bayes methods \cite[e.g.][]{efron2010large}.  Another
approach is to model $G_0$ as a discrete distribution supported on
finitely- or infinitely-many points; this leads to the study of
finite mixture models or nonparametric Bayes, respectively \citep{mclachlan2004finite,ferguson1973bayesian}.   This
paper focuses on another method for estimating $G_0$: Nonparametric maximum
likelihood.

Nonparametric maximum likelihood (NPML) methods for mixture
models --- and closely related empirical Bayes methods --- have been studied in statistics since the 1950s
\citep{robbins1950generalization, kiefer1956consistency,robbins1956empirical}.  They make
virtually no assumptions on the mixing distribution $G_0$ and provide
an elegant approach to problems like \eqref{MM}.   The general strategy is to first find the nonparametric maximum likelihood estimator (NPMLE) for $G_0$, denoted by $\hat{G}$, then perform inference via empirical Bayes \citep{robbins1956empirical,efron2010large}; that is, inference in \eqref{MM} is conducted via the posterior distribution $\Th_j\mid X_j$, under the assumption $G_0 = \hat{G}$.
Research into NPMLEs for mixture models has included
work on algorithms for computing NPMLEs and theoretical work on their
statistical properties \citep[e.g.][]{laird1978nonparametric,bohning1992computer,lindsay1995mixture,ghosal2001entropies,jiang2009general}.  However,
implementing and analyzing NPMLEs for mixture models has
historically been considered very challenging  \citep[e.g. p. 571
of][]{dasgupta2008asymptotic,donoho2013achieving}.  In this paper, we
study a computationally convenient approach involving approximate NPMLEs, which sidesteps many
of these difficulties and is shown to be effective in a variety of applications.

Our approach is largely motivated by recent work initiated by \cite{koenker2014convex}\footnote{Koenker \& Mizera's work was itself partially inspired by relatively recent theoretical work on NPMLEs by \cite{jiang2009general}.} and further pursued by others, including \cite{gu2015problem, gu2017empirical, gu2017unobserved} and \cite{dicker2016nonparametric}.  Koenker \& Mizera studied convex approximations to
NPMLEs for mixture
models in relatively large-scale problems, with up to 100,000s of observations.  In
\cite{koenker2014convex}, they showed that for the Gaussian location model, where $X_j =
\Th_j + Z_j \in \R$ and $\Th_j \sim G_0$, $Z_j \sim N(0,1)$ are
independent, a good approximation to the NPMLE for $G_0$ can be
accurately and
rapidly computed using generic interior point methods.

\cite{koenker2014convex}'s focus on convexity
and scalability is one of the key concepts for this paper.  Here, we show how a simple convex approximation to the NPMLE can
be used effectively in a broad range of problems with nonparametric mixture models; including problems
involving (i) multivariate mixing distributions, (ii) discrete data,
(iii) high-dimensional classification, and (iv) state-space models. Backed by new theoretical and empirical results, we provide concrete
guidance for efficiently and reliably computing approximate multivariate
NPMLEs.  Our main theoretical result (Proposition \ref{prop:conv}) suggests a simple procedure for finding the support set of the estimated mixing distribution.  Many of our empirical results highlight the benefits of using multivariate mixing distributions with correlated components (Sections \ref{sec:numMean}, \ref{sec:bball}, and \ref{sec:class}), as opposed to univariate mixing distributions, which have been the primary focus of previous research in this area (notable exceptions include theoretical work on the Gaussian location-scale model in \cite{ghosal2001entropies} and applications in \cite{gu2017empirical,gu2017unobserved} involving estimation problems with Gaussian models).  In Sections \ref{sec:bball}--\ref{sec:cgm}, we illustrate the performance of the methods described here in real-data
applications involving baseball, cancer microarray data, and online
blood-glucose monitoring for diabetes patients.  In comparison with other recent work on NPMLEs for mixture models, this paper distinguishes itself from \cite{gu2015problem} in that it focuses on more practical aspects of fitting general multivariate NPMLEs.  Additionally, in this paper we consider a substantially broader swath of applications than \cite{gu2017empirical,gu2017unobserved} and \cite{dicker2016nonparametric}, where the focus is estimation in Gaussian models and classification with a univariate NPMLE, respectively, and show that the same fundamental ideas may be effectively applied in all of these settings.

\section{NPMLEs for mixture models via convex optimization}
\label{sec:NPMLE}

\subsection{NPMLEs}
\label{sec:NPMLEsub}

Let $\G_{\mathcal{T}}$ denote the class of all probability distributions on
$\mathcal{T} \subseteq \R^d$ and suppose that $f_0(\cdot\mid\th)$ is the
probability density corresponding to $F_0(\cdot\mid\th)$ (with respect to
some given base measure).  For $G \in \G_{\mathcal{T}}$, the (negative) log-likelihood given the
data $X_1,...,X_p$ is 
\[
\ell(G) =  -\frac{1}{p}\sum_{j = 1}^p \log \left\{\int_{\mathcal{T}}
f_0(X_j\mid\th) \ dG(\th)\right\}.
\]
The Kiefer-Wolfowitz NPMLE for $G_0$ \citep{kiefer1956consistency},
denoted $\hat{G}$, solves the optimization problem
\begin{equation}\label{KW}
\min_{G \in \G_{\mathcal{T}}} \ell(G);  
\end{equation}
in other words, $\ell(\hat{G}) = \min_{G \in \G_{\mathcal{T}}}
\ell(G)$.  

Solving
\eqref{KW} and studying properties of $\hat{G}$ forms the basis for
basically all of the existing research into NPMLEs for mixture models
(including this paper).  Two important observations have had significant but somewhat
countervailing effects on this research: 
\begin{itemize}
	\item[(i)] The optimization problem
	\eqref{KW} is convex;
	\item[(ii)] If $f_0(X_j|\th)$ and $\mathcal{T}$ satisfy certain
	(relatively weak) regularity conditions, then $\hat{G}$ exists and may be chosen
	so that it is a discrete measure supported on at most $p$
	points.  
\end{itemize}
The first observation above is obvious; the second 
summarizes Theorems 18--21 of \cite{lindsay1995mixture}. Among the
more significant regularity conditions mentioned in (ii) 
is that the set $\{f_0(X_j|\th)\}_{\th \in \mathcal{T}}$ should be bounded for
each $j = 1,...,p$.  

Observation (i) leads to KKT-like conditions that characterize
$\hat{G}$ in terms of the gradient of $\ell$ and can be used to develop
algorithms for solving
\eqref{KW} \citep[e.g.][]{lindsay1995mixture}.  While this approach
is somewhat appealing, \eqref{KW} is typically an
infinite-dimensional optimization problem (whenever $\mathcal{T}$ is
infinite).  Hence, there are infinitely
many KKT conditions to check, which is generally impossible in
practice.

On the other hand, observation (ii) reduces \eqref{KW} to a
finite-dimensional optimization problem.  Indeed, (ii)
implies that $\hat{G}$ can be found by restricting attention in
\eqref{KW} to $G \in \G_p$, where $\G_p$ is the set of discrete probability measures
supported on at most
$p$ points in $\mathcal{T}$.  Thus, finding $\hat{G}$ is reduced to
fitting a finite mixture model with at most $p$ components.  This is
usually done with the EM-algorithm \citep{laird1978nonparametric}, where
in practice one may restrict to $G \in \G_q$ for some $q < p$. More recently, \citet{wang2015nonparametric} proposed to estimate $G$ in a multivariate setting using a gradient based method.
However, while (ii) reduces \eqref{KW} to a
finite-dimensional problem, we have lost convexity: 
\begin{equation}\label{ncvx}
\min_{G \in \G_q}
\ell(G)
\end{equation}
is not a convex problem because $\G_q$ is nonconvex.  When $q$
is large (and recall that the theory suggests we should take $q=p$),
well-known issues related to nonconvexity and finite mixture models
become a significant obstacle \citep{mclachlan2004finite}.

\subsection{A simple finite-dimensional convex approximation}

In this paper, we take a very simple approach to (approximately)
solving \eqref{KW}, which maintains convexity and
immediately reduces \eqref{KW} to a finite-dimensional problem. Consider a pre-specified finite grid $\L \subseteq \mathcal{T}$.  We study estimators $\hat{G}_{\L}$,
which solve
\begin{equation}\label{KM}
\min_{G \in \G_{\L}} \ell(G).
\end{equation}
The key difference between \eqref{ncvx} and \eqref{KM} is that
$\G_{\L}$, and hence \eqref{KM}, is convex, while $\G_q$ is
nonconvex.  Additionally, \eqref{KM} is a finite-dimensional
optimization problem,
because $\L$ is finite.  

To derive a more convenient formulation of \eqref{KM}, suppose that 
\begin{equation}\label{Lfinite}
\L = \{t_1,...,t_q\} \subseteq \mathcal{T}
\end{equation}
and define the simplex $\Delta^{q-1} = \{w = (w_1,...,w_q) \in\R^q; \ w_k \geq 0, \ w_1 +
\cdots + w_q = 1\}$.   Additionally, let $\delta_t$ denote a point
mass at $t \in \R^d$.  
Then there is a correspondence between $G = \sum_{k = 1}^q w_k\d_{t_k} \in \mathcal{G}_{\L}$ and points
$w = (w_1,...,w_q) \in \Delta^{q-1}$.  It follows that \eqref{KM} is equivalent to the optimization problem over the
simplex, 
\begin{equation}\label{KMsimp}
\min_{w \in \Delta^{q-1}} -\frac{1}{p} \sum_{j = 1}^p
\log\left\{\sum_{k = 1}^q f_0(X_j|t_k)w_k\right\}.
\end{equation} 

Researchers studying NPMLEs have previously considered estimators like $\hat{G}_{\L}$, which solve
\eqref{KM}--\eqref{KMsimp}.  However,
most have focused on relatively simple models with univariate mixing distributions $G_0$
\citep{bohning1992computer,jiang2009general,koenker2014convex}.  In very recent work, \cite{gu2017empirical, gu2017unobserved}  have considered multivariate NPMLEs for estimation problems involving Gaussian models --- by contrast, our aim is to formulate strategies for solving and implementing the general problem, as specified in \eqref{KW}, \eqref{KM}, and \eqref{KMsimp}.

\section{Choosing $\L$}
\label{sec:L}

The approximate NPMLE $\hat{G}_{\L}$ is the estimator we use
throughout the rest of the paper.  One remaining question is: How
should $\L$ be chosen?  Our perspective is that $\hat{G}_{\L}$ is an
approximation to $\hat{G}$ and its performance characteristics
are inherited from $\hat{G}$.  In general, $\hat{G}_{\L} \neq \hat{G}$.  However, as
one selects larger and larger finite grids 
$\L \subseteq \mathcal{T}$, which are more and more dense in
$\mathcal{T}$, evidently $\hat{G}_{\L} \to
\hat{G}$.  Thus, heuristically, as long as the grid $\L$ is ``dense enough'' in $\mathcal{T}$,
$\hat{G}_{\L}$ should perform similarly to $\hat{G}$.  

If $\mathcal{T}$ is compact, then any regular grid $\L \subseteq
\mathcal{T}$ is finite and implementing \eqref{KM} is
straightforward (specific implementations are discussed in Section \ref{sec:implementation}).  Thus, for compact $\mathcal{T}$, one can choose $\L$ to be a
regular grid with as many points as are computationally feasible.  For general $\mathcal{T}$, we propose a
two-step approach to choosing $\L$:  (i) Find a compact convex subset $\mathcal{T}_0
\subseteq \mathcal{T}$ so that \eqref{KW} is equivalent (or
approximately equivalent) to 
\begin{equation}\label{KWcpct}
\inf_{G
	\in \mathcal{G}_{\mathcal{T}_0}} \ell(G);
\end{equation}
(ii) choose $\L \subseteq \mathcal{T}_0 \subseteq \mathcal{T}$ to be a
regular grid with $q$
points, for some sufficiently large $q$.  Empirical results
seem to be fairly insensitive to the choice of $q$.  In Sections \ref{sec:bball}--\ref{sec:cgm}, we choose $q = 30^d$ for models with $d=2,3$ dimensional
mixing distributions $G$.  For some simple models with univariate $G$
($d=1$), theoretical results suggest that if $q = \sqrt{p}$, then $\hat{G}_{\L}$ is
statistically indistinguishable from $\hat{G}$
\citep{dicker2016nonparametric}.

Now we form strategies to construct $\Lambda$ for multivariate mixing distribution. First note that \citet{lindsay1981preoperties,lindsay1995mixture} has proposed the construction of $\Lambda$ in one dimensional cases. Our approach may be viewed as a generalization of their work. For each $j =
1,\ldots,p$, define 
\[
\hat{\th}_j =\hat{\th}(X_j) = \argmax_{\th \in \mathcal{T}}
f_0(X_j\mid \th)
\]
to be the maximum likelihood estimator (MLE) for $\Th_j$, given the data
$X_j \in \R^n$.  The following proposition implies that \eqref{KW} and
\eqref{KWcpct} are equivalent when the likelihoods $f_0(X_j\mid \th)$
are from a class of elliptical unimodal distributions, and $\mathcal{T}_0 =
\conv(\hat{\th}_1,\ldots,\hat{\th}_p)$ is the convex hull of
$\hat{\th}_1,\ldots, \hat{\th}_p$.   This result enables us to employ the strategy described above
for choosing $\L$ and finding $\hat{G}_{\L}$; specifically, we take
$\L$ to be a regular grid contained in the compact convex set
$\conv(\hat{\th}_1,\ldots,\hat{\th}_p)$.    

\begin{prop}\label{prop:conv}
	Suppose that $f_0$ has the form
	\begin{equation}\label{unimodal}
	f_0(X_j \mid \th) = h\big\{(\hat{\th}_j -
	\th)^{\T}\Sigma^{-1}(\hat{\th}_j - \th)\big\} u(X_j), 
	\end{equation}
	where $h: [0,\infty) \to [0,\infty)$ is a decreasing
	function, $\Sigma$ is a $p \times p$ positive definite matrix, and $u:\R^n
	\to \R$ is some other function that does not depend on $\th$.  Let $\mathcal{T}_0 =
	\conv(\hat{\th}_1,\ldots,\hat{\th}_p)$.  Then
	$\ell(\hat{G}) = \inf_{G \in \G_{\mathcal{T}_0}}
	\ell(G)$.
\end{prop}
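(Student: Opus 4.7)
The plan is to show $\ell(\hat{G}) = \inf_{G \in \G_{\mathcal{T}_0}} \ell(G)$ by two-sided inequality. One direction is trivial: since $\mathcal{T}_0 \subseteq \mathcal{T}$ we have $\G_{\mathcal{T}_0} \subseteq \G_{\mathcal{T}}$, so $\inf_{G \in \G_{\mathcal{T}_0}} \ell(G) \geq \inf_{G \in \G_{\mathcal{T}}} \ell(G) = \ell(\hat{G})$. The content of the result is the reverse inequality, which I will prove by constructing, for each $G \in \G_{\mathcal{T}}$, a measure $G' \in \G_{\mathcal{T}_0}$ with $\ell(G') \leq \ell(G)$.

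The central observation is that the elliptical unimodal form \eqref{unimodal} allows us to read off likelihoods from a Mahalanobis-type distance. Endow $\R^d$ with the inner product $\langle x,y\rangle_\Sigma = x^\T \Sigma^{-1} y$, which is valid because $\Sigma$ is positive definite. Since $\mathcal{T}_0 = \conv(\hat{\th}_1,\ldots,\hat{\th}_p)$ is a closed convex set in this Hilbert space (in fact compact, being the convex hull of finitely many points), metric projection $\pi : \R^d \to \mathcal{T}_0$ is well-defined, continuous, and satisfies the variational inequality $(y - \pi(\th))^\T \Sigma^{-1}(\th - \pi(\th)) \leq 0$ for all $y \in \mathcal{T}_0$. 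Expanding $(y - \th)^\T \Sigma^{-1}(y - \th)$ around $\pi(\th)$ and using this inequality together with positive-definiteness of $\Sigma^{-1}$ yields
\[
(y - \th)^\T \Sigma^{-1}(y - \th) \;\geq\; (y - \pi(\th))^\T \Sigma^{-1}(y - \pi(\th)) \qquad \text{for all } y \in \mathcal{T}_0.
\]
Since $\hat{\th}_j \in \mathcal{T}_0$ for each $j$, applying this with $y = \hat{\th}_j$ and invoking monotonicity of $h$ gives $f_0(X_j \mid \pi(\th)) \geq f_0(X_j \mid \th)$ for every $j$ and every $\th \in \mathcal{T}$.

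Given any $G \in \G_{\mathcal{T}}$, define $G' = \pi_\# G$ to be the pushforward measure on $\mathcal{T}_0$; this is a probability measure on $\mathcal{T}_0$ because $\pi$ is continuous, hence measurable, and maps into $\mathcal{T}_0$. By the change-of-variables formula,
\[
\int_{\mathcal{T}_0} f_0(X_j \mid \th')\, dG'(\th') \;=\; \int_{\mathcal{T}} f_0(X_j \mid \pi(\th))\, dG(\th) \;\geq\; \int_{\mathcal{T}} f_0(X_j \mid \th)\, dG(\th).
\]
Taking $-\log$ of each side (monotonically reverses the inequality) and averaging over $j$ gives $\ell(G') \leq \ell(G)$, so $\inf_{G \in \G_{\mathcal{T}_0}} \ell(G) \leq \ell(\hat G)$, completing the proof.

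The only technical point worth watching is the well-posedness of the pushforward and the chain of inequalities: the projection argument needs $\Sigma^{-1}$ to induce a genuine inner product (so the standard obtuse-angle characterization applies), and one must verify that $\pi(\th) \in \mathcal{T}_0$ is attained uniquely so that the pushforward is unambiguous. Both hold here because $\Sigma$ is positive definite and $\mathcal{T}_0$ is a nonempty compact convex subset of a finite-dimensional inner-product space. All other steps are routine manipulations once the key geometric inequality is in place.
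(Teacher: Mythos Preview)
Your proof is correct and follows essentially the same approach as the paper's: project onto $\mathcal{T}_0$ in the $\Sigma^{-1}$ inner product and use the variational inequality for convex projection together with monotonicity of $h$. The only differences are cosmetic---the paper restricts to finitely supported $G$ and moves one atom at a time, whereas you handle arbitrary $G$ via the pushforward $\pi_\# G$---and your use of the variational \emph{inequality} is in fact more accurate than the paper's claimed equality $(\tilde t_q - t_q)^\T\Sigma^{-1}(\hat\theta_j - \tilde t_q)=0$, which would only hold for projection onto an affine subspace.
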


\begin{proof}
	
	Assume that $G = \sum_{k = 1}^q w_k \d_{t_k}$, where $t_1,\ldots,t_q
	\in \mathcal{T}$ and $w_1,\ldots,w_q > 0$.  Further assume that $t_q
	\notin \mathcal{T}_0= \conv(\hat{\th}_1,\ldots,\hat{\th}_p)$.  We show
	that there is another probability distribution $\tilde{G} = \sum_{k =
		1}^{q-1}w_k\d_{t_k} + w_q\d_{\tilde{t}_q}$, with $\tilde{t}_q \in
	\mathcal{T}_0$, satisfying $\ell(\tilde{G}) \leq \ell(G)$.  This
	suffices to prove the proposition.  
	
	Let $\tilde{t}_q$ be the projection of $t_q$ onto $\mathcal{T}_0$ with
	respect to the inner product $(s,t) \mapsto s^{\T}\Sigma^{-1}t$.  To
	prove that $\ell(\tilde{G}) \leq \ell(G)$, we show that $f_0(X_j\mid
	\tilde{t}_q) \geq f_0(X_j\mid t_q)$ for each $j = 1,\ldots,p$.  We have
	\begin{align*}
	(\hat{\th}_j - t_q)^{\T}\Sigma^{-1}(\hat{\th}_j - t_q) & =(\hat{\th}_j
	-
	\tilde{t}_q +
	\tilde{t}_q -
	t_q)^{\T}\Sigma^{-1}(\hat{\th}_j
	-
	\tilde{t}_q +
	\tilde{t}_q -
	t_q)  \\
	& = (\hat{\th}_j - \tilde{t}_q)^{\T}\Sigma^{-1}(\hat{\th}_j -
	\tilde{t}_q) + 2(\tilde{t}_q - t_q)^{\T}\Sigma^{-1}(\hat{\th}_j -
	\tilde{t}_q) \\
	& \qquad + (\tilde{t}_q - t_q)^{\T}\Sigma^{-1}(\tilde{t}_q -
	t_q) \\
	& \geq (\hat{\th}_j - \tilde{t}_q)^{\T}\Sigma^{-1}(\hat{\th}_j -
	\tilde{t}_q),
	\end{align*}
	where we have used the fact that $(\tilde{t}_q - t_q)^{\T}\Sigma^{-1}(\hat{\th}_j -
	\tilde{t}_q) = 0$, because $\tilde{t}_q$ is the projection of
	$t_q$ onto $\mathcal{T}_0$.  By \eqref{unimodal}, it follows that
	$f_0(X_j\mid \tilde{t}_q) \geq f_0(X_j\mid t_q)$, as was to be
	shown.  
\end{proof}

The
condition \eqref{unimodal} is rather restrictive, but we believe it
applies in a number of important problems.  The
fundamental example where \eqref{unimodal} holds is $X_j \mid
\Th_j \sim N(\Th_j,\Sigma)$;  in
this case $\hat{\th}_j = X_j$ and \eqref{unimodal} holds with $u$ being certain constant and $h(z) \propto e^{-z/2}$.
Condition \eqref{unimodal} also holds in elliptical models, 
where $\Th_j$ is the location parameter of $X_j\mid \Th_j$.  More
broadly,  if $X_j = (X_{1j},\ldots,X_{nj})^{\T}  \in \R^n$ may be viewed as a
vector of replicates $X_{ij}$, $i = 1,\ldots,n$, drawn from some common
distribution conditional on $\Th_j$, then standard results suggest that the MLEs $\hat{\th}_j$ may be
approximately Gaussian if $n$ is sufficiently large, and
\eqref{unimodal} may be approximately valid.  Specific applications where
a normal approximation argument for $\hat{\th}_j$ may imply that
\eqref{unimodal} is approximately valid include count data (similar to
Section
\ref{sec:bball}) and time series modeling (Section \ref{sec:cgm}).

\section{Connections with finite mixtures}

Finding $\hat{G}_{\L}$ is equivalent to fitting a
finite-mixture model, where the locations of the atoms for the mixing
measure have been
pre-specified (specifically, the atoms are taken to be the points in
$\L$).  Thus, the approach in this paper reduces computations for the relatively complex
nonparametric mixture model \eqref{MM} to a convex optimization problem that is substantially
simpler than fitting a standard finite mixture model (generally a
non-convex problem). 

An important distinction of nonparametric mixture
models is that they lack the built-in interpretability of the
components/atoms from finite mixture models, and are less suited for clustering applications.  On the other hand, taking the
nonparametric approach provides additional flexibility for modeling heterogeneity in applications where it is not clear that
there should be well-defined clusters.  Moreover, post hoc clustering and finite
mixture model methods could still be used after
fitting an NPMLE; this might be advisable if, for instance, $\hat{G}_{\L}$
has several clearly-defined modes.

\section{Implementation overview}
\label{sec:implementation}

A variety of well-known algorithms are available for solving
\eqref{KMsimp} and finding $\hat{G}_{\L}$.  We have experimented with several, including the EM-algorithm, interior point methods, and the Frank-Wolfe algorithm.  This section contains a brief overview of how we have implemented these algorithms; numerical results comparing the algorithms are contained in the following section.  

One of the early applications of the EM-algorithm is mixture models \citep{laird1978nonparametric}. Solving \eqref{KMsimp} with the EM-algorithm for mixture models with predetermined grid-points is especially simple, because the problem is convex (recall that the finite mixture model problem --- as opposed to the nonparametric problem --- is typically non-convex).  \cite{koenker2014convex} have developed interior point methods for solving \eqref{KMsimp}.  Along with \cite{koenker2016rebayes}, they created an R package \texttt{REBayes} that solves \eqref{KMsimp} for a handful of specific nonparametric mixture models, e.g. Gaussian mixtures and univariate Poisson mixtures; the \texttt{REBayes} package calls an external optimization software package, Mosek, and relies on Mosek's built-in interior point algorithms.  In our numerical analyses, we used \texttt{REBayes} to compute some one-dimensional NPMLEs with interior point methods.  To estimate multi-dimensional NPMLEs with interior point methods, we used our own implementations based on another R package \texttt{Rmosek} \citep{mosek}  (\texttt{REBayes} does appear to have some built-in functions for estimating two-dimensional NPMLEs, but we found them to be somewhat unstable in our applications).  We note that our interior point implementation solves the primal problem \eqref{KMsimp}, while \texttt{REBayes} solves the dual.  The Frank-Wolfe algorithm \citep{frank1956algorithm} is a classical algorithm for constrained convex optimization problems, which has recently been the subject of renewed attention \cite[e.g.][]{jaggi2013revisiting}.  Our implementation of the Frank-Wolfe algorithm closely resembles the ``vertex direction method,'' which has previously been used for finding the NPMLE in nonparametric mixture models \citep{bohning1995review}.  

All of the algorithms used in this paper were implemented in R.  We did not attempt to heavily optimize any of these implementations; instead, our main objective was to demonstrate that there are a range of simple and effective methods for finding (approximate) NPMLEs.  While the \texttt{REBayes} and \texttt{Rmosek} packages were used for their interior point methods, no packages beyond base R were required for any of our other implementations.  

\section{Simulation studies}

This section contains simulation results for NPMLEs and a Gaussian location-scale mixture model.  Section \ref{sec:numAlg} contains a comparison of the various NPMLE algorithms described in the previous section.  In Section \ref{sec:numMean}, we compare the performance of NPMLE-based estimators to other commonly used methods for estimating the mean in a Gaussian location-scale model.  

In all of the simulations described in this section we generated the data as follows.  For $j = 1,\ldots,p$, we generated independent $\Th_j = (\mu_j,\s_j) \sim G_0$ and corresponding observations $X_j \in \R^n$.  Each $X_j  = (X_{1j},\ldots,X_{nj})^{\T} $ was a vector of $n$ replicates
$X_{1j},\ldots,X_{nj}\mid\Th_j \sim N(\mu_j,\s_j^2)$ that were generated independently, conditional on $\Th_j$. In other words, 
\begin{equation}\label{GLS}
X_{j}\mid\Th_j \sim f_0(X_j|\Th_j)= \left(\frac{1}{\sqrt{2\pi}\sigma_j}\right)^n\exp\left\{-\frac{\sum_{i=1}^n(X_{ij}-\mu_j)^2}{2\sigma_j^2}\right\}.
\end{equation}
In the general model \eqref{GLS}, the mixing distribution $G_0$ is bivariate.  However, we considered two values of $G_0$ in our simulations: one where the marginal distribution of $\s_j$ was degenerate (i.e. $\s_j$ was constant, so $G_0$ was effectively univariate) and one  where the marginal distribution of $\s_j$ was non-degenerate.  (Note that for non-degenerate $\s_j$, $n \geq 2$ replicates are essential in order to ensure that the likelihood is bounded and that $\hat{G}$ exists.)

Throughout the simulations, we took $p = 1000$ and $n = 16$.  For the first mixing distribution $G_0$ (degenerate $\s_j$), we fixed $\s_j = 4$ and took $\mu_j$ so that $\P(\mu_j = 0) = \P(\mu_j = 5) = 1/2$.  For the second mixing distribution (non-degenerate $\s_j$), we took $\P(\mu_j = 0, \ \s_j = 5) = \P(\mu_j = 5, \ \s_j = 3) = 0.5$; for this distribution $\mu_j$ and $\s_j$ are correlated.

\subsection{Comparing NPMLE algorithms}\label{sec:numAlg}

For each mixing distribution, we computed $\hat{G}_{\L}$ using the algorithms described in Section \ref{sec:implementation}: the EM-algorithm, an interior point method with \texttt{Rmosek}, and the Frank-Wolfe algorithm.  For each of these algorithms, we also computed $\hat{G}_{\L}$ for various grids $\L$.  Specifically, we considered regular grids $\L = \{m_k\}_{k= 1}^{q_1} \times \{s_k\}_{k = 1}^{q_2}  \subseteq [\min \hat{\mu}_j, \max\hat{\mu_j}] \times [\min \hat{\s}_j, \max\hat{\s}_j]  \subseteq \R^2$, where $\hat{\mu}_j = n^{-1}\sum_i X_{ij}$ and $\hat{\s}_j^2 = n^{-1}\sum_i (X_{ij} - \hat{\mu}_j)^2$.  The values $q_1,q_2$ determine number of grid-points in $\L$ for $\mu_j,\s_j$, respectively, and in the simulations we fit estimators with $(q_1,q_2) = (30,30), (50,50)$, and $(100,100)$. The stopping criterion of EM and Frank-Wolfe is
		\bes
		|\ell(w^{(K)})-\ell(w^{K-1})|/|\ell(w^{(K-1)}| \le 10^{-6},
		\ees 
where $\ell(w)$ is the objective function in (\ref{KMsimp}) and $w^{(K)}$ is the solution at the $K$-th iteration of respective methods.

In addition to fitting the two-dimensional NPMLEs $\hat{G}_{\L}$ described above, for the simulations with degenerate $\s_j$ we also fit one-dimensional NPMLEs to the data $\hat{\mu}_1,\ldots,\hat{\mu}_p$, according to the model
\[
\hat{\mu}_j \mid \mu_j \sim N(\mu_j,1).
\]
We fit these one-dimensional NPMLEs using all of the same algorithms for the two-dimensional NPMLEs (EM, interior point with \texttt{Rmosek}, and Frank-Wolfe), and we also used the \texttt{REBayes} interior point implementation to estimate the distribution of $\mu_j$ in this setting.  For the one-dimensional NPMLEs, we took $\L \subseteq [\min \hat{\mu}_j,\max\hat{\mu}_j] \subseteq \R$ to be the regular grid with $q = 300$ points.  This allows us to compare the performance of methods for one- and two-dimensional NPLMEs (where the one-dimensional NPMLEs take the distribution of $\s_j$ to be known) and compare the performance of the two interior point algorithms, among other things.

For each simulated dataset and estimator $\hat{G}_{\L}$ we recorded several metrics.  First, we computed the total squared error (TSE)
	\bes
	\TSE = \sum_{j = 1}^p (\muhat_j - \mu_j)^2,
	\ees
	where
	\bes
	\hat{\mu}_j = \E_{\hat{G}_{\L}}(\mu_j \mid X_j)=\frac{\int\mu_jf_0(X_j|\mu_j,\sigma_j)d\hat{G}_{\L}(\mu_j,\sigma_j)}{\int f_0(X_j|\mu_j,\sigma_j)d\hat{G}_{\L}(\mu_j,\sigma_j)}. 
	\ees
Second, we computed the difference between the log-likelihood of $\hat{G}_{\L}$ and the log-likelihood of $\hat{G}_{\EM}$, the corresponding estimator for $G_0$ based on the EM-algorithm:
\[
\Delta(\mbox{log-lik.}) = \ell(\hat{G}_{\EM}) - \ell(\hat{G}_{\L}).
\]
Note that $\Delta(\mbox{log-lik.}) > 0$ if $\hat{G}_{\L}$ has a {\em smaller} negative log-likelihood than $\hat{G}_{\EM}$ (we are taking the EM-estimator $\hat{G}_{\EM}$ as a baseline for measuring the log-likelihood). Finally, we recorded the time required to compute $\hat{G}_{\L}$ (in seconds; all calculations were performed on a 2015 MacBook Pro laptop).  Summary statistics are reported in Table \ref{tab:alg}.

It is evident that the results in Table \ref{tab:alg} are relatively insensitive to the number of grid-points $(q_1,q_2)$ chosen for the two-dimensional NPMLE implementations.  In terms of TSE, the EM algorithm and interior point methods perform very similarly across all of the settings, while the interior point methods appear to slightly out-perform the EM algorithm in terms of $\Delta(\mbox{log-lik.})$ across the board.  Additionally, the interior point methods have smaller compute time than the EM algorithm, though the difference is not too significant for applications at this scale (for mixing distribution 1, with degenerate $\s_j$, the \texttt{REBayes} dual implementation appears to be somewhat faster than our \texttt{Rmosek} primal implementation).  The Frank-Wolfe algorithm is the fastest implementation we have considered, but its performance in terms of TSE and $\Delta(\mbox{log-lik.})$ is considerably worse than the EM algorithm or interior point methods.  In the remainder of the paper, we chose to use the EM algorithm exclusively for computing NPMLEs --- we believe it strikes a balance between simplicity and performance.  

\begin{table}[H]
	\begin{center}
		\caption{Comparison of different NPMLE algorithms.  Mean values (standard deviation in parentheses) reported from 100 independent datasets; $p=1000$, throughout simulations. Mixing distribution 1 has constant $\s_j$; mixing distribution 2 has correlated $\mu_j$ and $\s_j$.}
		\label{tab:alg}
		\setlength{\tabcolsep}{1pt}
		\begin{tabular}{clccc}
			&& TSE &\begin{tabular}[b]{c} $\Delta(\text{log-lik})$\\ $\! \times\! 10^{4}$\end{tabular} &\begin{tabular}[b]{c} Time\\ (secs.) \end{tabular}\\ \hline
			Mixing dist. 1&EM &&& \\
			(Bivariate)&\ \ \ $(q_1,q_2) = (30,30)$ &130.5  (42.6) & 0 (0)& 9\\ 
			&\ \ \  $(q_1,q_2) = (50,50)$& 130.4 (42.7)& 0 (0)& 33\\
			&\ \ \  $(q_1,q_2) = (100,100)$&130.4 (42.6) &0 (0)& 136
			\\ 
			&Interior point (\texttt{Rmosek}) &&&\\ 
			&\ \ \ $(q_1,q_2) = (30,30)$&130.7 (42.6)& 6 (1)& 8\\
			&\ \ \ $(q_1,q_2) = (50,50)$& 130.5 (42.9)& 9 (1)&20 \\ 
			&\ \  \ $(q_1,q_2) = (100,100)$&130.6 (42.8) & 11 (1)  &80 \\ 
			&Frank-Wolfe  &&& \\ 
			&\ \ \ $(q_1,q_2) = 
			(30,30)$&147.3 (45.1)& -234 (130)& 5 \\
			&\ \ \   $(q_1,q_2) = (50,50)$& 147.0 (45.9)& -238 (134)& 14\\
			&\ \ \  $(q_1,q_2) = (100,100)$&146.2 (45.4)& -238 (128)& 55\\\hline
			Mixing dist. 1&EM&124.4 (41.6) & 0 (0)& 1
			\\
			(univariate; $q = 300$;	&Interior point (\texttt{Rmosek}) &124.3 (42.1)& 6 (1)&3\\
			assume known $\sigma_j$)&Interior point (\texttt{REBayes})&124.3 (42.1)& 6 (1) &1  \\  
			&	Frank-Wolfe & 126.1 (41.8)& -4 (4)& 1 \\
			\hline
			Mixing dist. 2&		EM &&& \\
			&	\ \ \ $(q_1,q_2) = (30,30)$& 54.0 (28.4) &0 (0)& 9\\ 
			&\ \ \  $(q_1,q_2) = (50,50)$&  54.0 (28.9) & 0 (0) &34\\
			&		\ \ \  $(q_1,q_2) = (100,100)$&53.9 (28.8) &0 (0) &141 \\ 
			&	Interior point (\texttt{Rmosek}) &&& \\ 
			&	\ \ \ $(q_1,q_2) = (30,30)$& 54.3 (28.8) & 5 (1) &8\\  
			&	\ \ \ $(q_1,q_2) = (50,50)$&54.2 (29.1)& 8 (1)&20\\ 
			&	\ \  \ $(q_1,q_2) = (100,100)$ &54.3 (29.1) & 10 (1) &82 \\
			&	Frank-Wolfe  &&& \\ 
			&	\ \ \ $(q_1,q_2) = 
			(30,30)$& 82.2 (39.3)& -372 (217) &5 \\
			&	\ \ \   $(q_1,q_2) = (50,50)$&83.1 (36.4) & -402 (232) &14\\
			&	\ \ \  $(q_1,q_2) = (100,100)$&82.0 (37.7) & -396 (240) & 56\\
			\hline
		\end{tabular}
	\end{center}
\end{table}

\subsection{Gaussian location-scale mixtures: Other methods for estimating a normal mean vector} \label{sec:numMean}

Beyond NPMLEs, we also implemented several other methods that are commonly used for estimating the mean vector $\boldsymbol{\mu} = (\mu_1,\ldots,\mu_p)^\T \in \R^p$ in Gaussian location-scale models and computed the corresponding TSE.  Specifically, we considered the fixed-$\Th_j$ MLE, $\boldsymbol{\muhat} = (\hat{\mu}_1,\ldots,\hat{\mu}_p)^\T \in \R^p$; the James-Stein estimator; the heteroscedastic SURE estimator of \citet{xie2012sure}; and a soft-thresholding estimator. The James-Stein estimator is a classical shrinkage estimator for the
Gaussian location model.  The version employed here is described in \cite{xie2012sure} and is designed for
heteroscedastic data. The heteroscedastic SURE estimator is another shrinkage estimator,
which was designed to ensure that features with a high noise variance
are ``shrunk'' more than those with a low noise variance. Both the James-Stein estimator and the heteroscedastic SURE estimator depend on the values $\s_j$.  
The soft-thresholding
estimator takes the form $\hat{\mu}(X) = s_t(\hat{\mu}_j)$, where $t \geq  0$ is a constant and $s_t(x) = \mathrm{sign}(x)\max\{|x| - t,0\}$,
$x \in \R$.  For soft-thresholding estimators, $t$ was chosen to minimize the TSE.  Observe that the James-Stein, SURE, and soft-thresholding estimators all depend on information that is typically not available in practice: the value of $\s_j$ and the actual TSE.  By contrast, the two-dimensional NPMLEs described in the previous sub-section utilize only the observed data $X_1,\ldots,X_p$.

In Table \ref{tab:est}, we report the TSE for the different estimators described in this section, along with the TSE for the bivariate NPMLE fit using the EM algorithm.  We also fit a univariate NPMLE in this example, where $\s_j$ was {\em not} assumed to be known; instead we used the plug-in estimator $\hat{\s}_j$ in place of $\s_j$ and then computed the NPMLE for the distribution of $\mu_j$.   

Table \ref{tab:est} shows that the NPMLEs dramatically out-perform the alternative estimators in this setting, in terms of TSE. The bivariate NPMLE out-performs the univariate NPMLE under both mixing distributions 1 and 2, but its advantage is especially pronounced under mixing distribution 2, where $\mu_j$ and $\s_j$ are correlated.  This highlights the potential advantages of bivariate NPMLEs over univariate approaches in settings with multiple parameters.  

\begin{table}[h]
	\begin{center}
		\caption{Mean TSE for various estimators of $\boldsymbol{\mu} \in \R^p$ based on 100 simulated datasets; $p = 1000$. $(q_1,q_2)$ indicates the grid-points used to fit $\hat{G}_{\L}$.}
		\label{tab:est}
		\setlength{\tabcolsep}{1.2pt}
		\begin{tabular}{lccclccc}
			Method & Mixing dist.1 & \ Mixing dist.2 \\ \hline 
			Fixed-$\Th_j$ MLE & 997.0 (48.2)&1059.3 (57.9)\\
			Soft-Thresholding &826.2 (50.0)& 793.7 (46.0) \\
			James-Stein & 859.7 (43.6)&935.2 (53.0)\\
			SURE &859.7 (43.6)& 880.7 (48.1)\\
			Univariate NPMLE $q = 300$ & 170.7 (47.0) & 285.4 (63.5)\\ 
			Bivariate NPMLE $(q_1,q_2)=(100, 100)$ &130.4 (42.6)&53.9 (28.8)
		\end{tabular}		
	\end{center}
\end{table}

\section{Baseball data}
\label{sec:bball}

Baseball data is a well-established testing
ground for empirical Bayes methods \citep{efron1975data}.  The
baseball dataset we analyzed contains the number of at-bats and
hits for all of the Major League Baseball players during the 2005
season and has been previously analyzed in a number of papers \citep{brown2008season,jiang2010empirical,muralidharan2010empirical,xie2012sure}.
The goal of the analysis is to use the data from the first half of the
season to predict each player's batting average (hits/at-bats) during
the second half of the season.  Overall, there are 929 players in the
baseball dataset; however, following \cite{brown2008season} and others, we
restrict attention to the 567 players with more than 10
at-bats during the first half of the season for training and among which 499 players with more than 10 at-bats during the second half of season for prediction. (We follow the other
preprocessing steps described in \cite{brown2008season} as well).

Let $A_{j}$ and $H_{j}$ denote the number
of at-bats and hits, respectively, for player $j$ during the first
half of the season, $j=1,...,567$.  We assume that $(A_j,H_j)$ follows a
Poisson-binomial mixture model, where $A_j\mid (\l_j,\pi_j) \sim \mbox{Poisson}(\l_j)$,  $H_j\mid
(A_j,\l_j,\pi_j) \sim \mbox{binomial}(A_j,\pi_j)$,
and $(\l_j,\pi_j) \sim G_0$.  This model has a bivariate mixing distribution $G_0$, i.e. $d = 2$.  In the notation of \eqref{MM}, $X_j =
(A_j,H_j)$, $\Th_j = (\l_j,\pi_j)$ and
\bes
f_0(X_j|\Th_j)=f_0(A_j, H_j|\lambda_j,\pi_j)=\frac{\lambda_j^{A_j}e^{-\lambda_j}}{A_j!}\times {{A_j}\choose{H_j}}\pi_j^{H_j}(1-\pi_j)^{A_j-H_j}.
\ees
We took $\L\subseteq [\min A_j, \max A_j]\times [ \min A_j/H_j, \max A_j/H_j]$ to be the regular grid with $(q_1,q_2)=(30,30)$.
We propose
to estimate each
player's batting average for the second half of the season by
the posterior mean of $\pi$, computed under $(\l,\pi) \sim
\hat{G}_{\L}$, 
\begin{equation}\label{pi_hat}
\hat{\pi}_j = \E_{\hat{G}_{\L}}(\pi_j \mid A_j,H_j)=\frac{\int\pi_jf_0(A_j, H_j|\lambda_j,\pi_j)d\hat{G}_{\L}(\lambda_j,\pi_j)}{\int f_0(A_j, H_j|\lambda_j,\pi_j)d\hat{G}_{\L}(\lambda_j,\pi_j)}.  
\end{equation}

Most previously published analyses of the baseball data begin by transforming
the data via the variance stabilizing transformation 
\begin{equation}\label{stab}
W_j = \arcsin 
\sqrt{\frac{H_j + 1/4}{A_j + 1/2}}
\end{equation}
(\citet{muralidharan2010empirical} is a notable exception).
Under this transformation,
$W_j$ is approximately distributed as $N\{\mu_j,(4A_j)^{-1}\}$, where $\mu_j = \arcsin
\sqrt{\pi_j}$. Methods for Gaussian observations may be applied to the transformed
data, with the objective of estimating $\mu_j$.  Following this approach, a variety of methods based on shrinkage,
the James-Stein estimator, and parametric empirical
Bayes methods for Gaussian data
have been
proposed and studied \citep{brown2008season,jiang2010empirical,xie2012sure}.

Under the transformation \eqref{stab}, it is standard to use total squared error to measure
the performance of estimators $\hat{\mu}_j$ \citep[e.g.][]{brown2008season}.  In this example, the total squared error is defined as
\[
\mathrm{TSE} = \sum_{j=1,...,499} \left\{(\hat{\mu}_j - \tilde{W}_j)^2 -
\frac{1}{4\tilde{A}_j}\right\},
\]
where 
\[
\tilde{W}_j = \arcsin \sqrt{\frac{\tilde{H}_j +
		1/4}{\tilde{A}_j + 1/2}},
\] and $\tilde{A}_j$ and $\tilde{H}_j$ denote the
at-bats and hits from the second half of the season, respectively.
For convenience of comparison, we used $\mathrm{TSE}$ to measure
the performance of our estimates $\hat{\pi}_j$, after applying the
transformation $\hat{\mu}_j = \arcsin \sqrt{\hat{\pi}_j}$.

Results from the baseball analysis are reported in
Table \ref{tab:bball}.  Following the work of others, we have analyzed
all players from the dataset together, and then the pitchers and
non-pitchers from the dataset separately. 
 In addition to our
Poisson-binomial NPMLE-based estimators \eqref{pi_hat}, we
considered six other previously studied estimators:
\begin{enumerate}
	\item  The (fixed-parameter) {\em MLE}
	estimator $\hat{\mu}_j = W_j$ uses each player's hits and at-bats from the first half of
	the season to estimate their performance in the second half.
	\item The
	{\em grand mean} $\hat{\mu}_j = p^{-1}(W_1 + \cdots + W_p)$ gives the
	exact same estimate for each player's performance in the second half of the season,
	which is equal to the average performance of all players during the
	first half.
	\item The {\em James-Stein} parametric empirical Bayes estimator described in
	\cite{brown2008season}.
	\item The weighted generalized MLE ({\em weighted GMLE}), which uses at-bats as a
	covariate \citep{jiang2010empirical}.  This is essentially a
	univariate NPMLE-method for Gaussian models with covariates.
	\item The {\em semiparametric SURE} estimator is a flexible shrinkage estimator that may be viewed as a
	generalization of the James-Stein estimator \citep{xie2012sure}.  
	\item The {\em binomial mixture} method in
	\cite{muralidharan2010empirical} is another empirical Bayes method, which does not require the data to be
	transformed and estimates $\pi_j$ directly (in \cite{muralidharan2010empirical}, they work conditionally on
	the at-bats $A_j$).  $\mathrm{TSE}$ is computed after applying the $\mathrm{arcsin}\sqrt{\cdot}$ transformation.
\end{enumerate}

\begin{table}[h]
	\begin{center}
		\caption{Baseball data.  $\mathrm{TSE}$ relative to the
			MLE. Minimum error is in {\bf bold} for each analysis.}
		\label{tab:bball}
		\setlength{\tabcolsep}{1pt}
		\begin{tabular}{lccclccc}
			& &Non- & &&& Non- & \\ Method &All&Pitchers&
			Pitchers
			& Method & All & Pitchers & Pitchers  \\ \hline 
			 MLE& 1&1&1 & SURE  & 0.41 &
			{\bf 0.26}
			& {\bf 0.08} \\ 
			Grand mean & 0.85& 0.38& 0.13 & Binomial
			mixture & 0.59
			& 0.31 & 0.16 \\ 
			James-Stein &0.54& 0.35&0.17 &
			NPMLE
			& {\bf
				0.29} &
			{\bf 0.26} & 0.14 \\ 
			GMLE &0.30& {\bf 0.26}&0.14   
		\end{tabular}		
	\end{center}
\end{table}

The values reported in Table \ref{tab:bball} are the $\mathrm{TSE}$s of each
estimator, 
relative to the $\mathrm{TSE}$ of the fixed-parameter MLE.  Our
Poisson-binomial method performs very well, recording the minimum
$\mathrm{TSE}$ when all of the data (pitchers and non-pitchers) are
analyzed together and
for the non-pitchers.  Moreover, the Poisson-binomial NPMLE
$\hat{G}_{\L}$ works on the original scale of the data (no
angular transformation is required) and may be
useful for other purposes, beyond just estimation/prediction. Figure
\ref{fig:bball_dist} (a) plots the marginal Poisson-binomial NPMLE distribution of $\pi$, estimated with EM algorithm using all players, non-pitchers and pitchers data. Figure
\ref{fig:bball_dist} (b) contains a histogram and contours of 20,000 independent draws from the
estimated mixture distribution of $(A_j,H_j/A_j)$, fitted with the
Poisson-binomial NPMLE to all players in the baseball dataset.  Observe that
the distribution appears to be bimodal.   By comparing
this histogram with histograms of the observed data from the
non-pitchers and pitchers separately (Figure \ref{fig:bball_dist} (c)--(d)),
it appears that the mode at the left of Figure \ref{fig:bball_dist} (b)
represents a group of players that includes the pitchers and the mode at the right
represents the bulk of the non-pitchers.

\begin{figure*}[t]
	\centering
	\includegraphics[width=\textwidth]{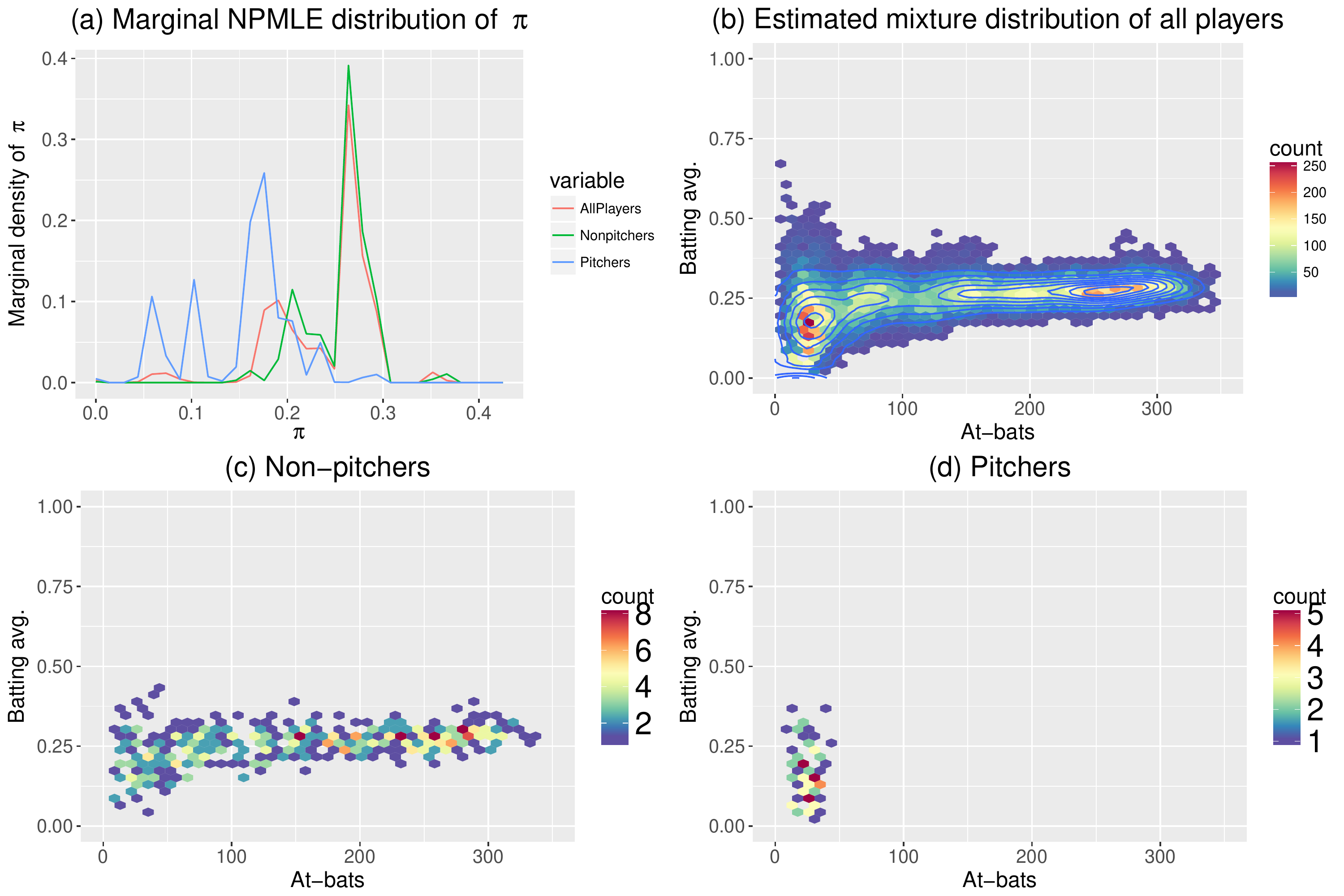}
	\caption{\label{fig:bball_dist} (a) Marginal NPMLE distribution of $\pi$, computed using EM algorithm; (b) histogram of 20,000 independent
		draws from the estimated distribution of $(A_j,H_j/A_j)$, fitted with the
		Poisson-binomial NPMLE to all players in the baseball dataset; (c)
		histogram of non-pitcher
		data from the baseball dataset; (d) histogram of pitcher data from the baseball
		dataset.}
\end{figure*}

\section{Two-dimensional NPMLE for cancer microarray classification}
\label{sec:class}

\cite{dicker2016nonparametric} proposed a univariate NPMLE-based method for
high-dimensional classification problems and studied  applications
involving cancer microarray data.  The classifiers from
\cite{dicker2016nonparametric} are based on a Gaussian model
with one-dimensional mixing
distributions, i.e. $d=1$.  In this section we show that using a bivariate
mixing distribution may substantially improve performance.  

Two datasets from the Microarray
Quality Control Phase II project \citep{maqc2010microarray} are considered; one
from a breast cancer study and one from a myeloma study.  The
training dataset for the breast cancer study contains $n = 130$
subjects and $p = 22283$ probesets (genes); the test dataset
contains $100$ subjects.  The training dataset for
the myeloma study contains $n = 340$ subjects and $p = 54675$
probesets; the test dataset contains $214$ subjects.   The goal
is to use the training data to build binary classifiers for several
outcomes, then check the
performance of these classifiers on the test data.   Outcomes for
the breast cancer data are response to treatment (``Response'') and
estrogen receptor status (``ER status''); outcomes for the myeloma
data are overall and event-free survival (``OS'' and ``EFS'').  

For each
of the studies, let
$X_{ij}$ denote the expression level of gene $j$ in subject $i$ and
let $Y_i \in \{0,1\}$ be the class label for subject $i$.  Let $X_j =
(X_{1j},\ldots,X_{nj}) ^{\T} \in \R^n$.  We assume that each class ($k
= 0,1$) and
each gene ($j=1,\ldots,p$) has an associated mean expression level
$\mu_{jk} \in \R$, and that conditional on the $Y_i$ and $\mu_{jk}$ all of
the $X_{ij}$ are independent and Gaussian, satisfying $X_{ij}\mid (Y_i = k, \ \mu_{jk}) \sim
N(\mu_{jk},1)$  (the gene-expression levels in the datasets are all
standardized to have variance 1).  

In \cite{dicker2016nonparametric}, they assume
that $\mu_{1k},\ldots,\mu_{pk} \sim G_k$ ($k = 0,1$) are all independent draws
from two distributions, $G_0$ and $G_1$.  They use the
training data from classes $k= 0,1$ to separately estimate the
distributions $G_0$ and $G_1$ using NPMLEs, and then implement the Bayes
classifier, replacing $G_0$ and $G_1$ with the corresponding
estimates.   In this paper, we model $\Th_j = (\mu_{j0},\mu_{j1})
\sim G_0$ jointly, then compute the bivariate
NPMLE $\hat{G}_{\L}$, and finally use $\hat{G}_{\L}$ in place of $G_0$
in the Bayes classifier for this model.   
In the notation of \eqref{MM}, $\Th_j = (\mu_{j0},\mu_{j1})$ and
%\bes
%	X_{j}\mid\Th_j \sim f_0(X_j|\Th_j)\propto\exp\left(-\frac{1}{2}\sum_{k=0,1}\sum_{ Y_i=k}(\X_{ij}^k-\mu_{jk})^2\right).
%	\ees
	\bes
	X_j\mid\Th_j \sim f_0(X_j|\Th_j)\propto\exp\left[-\frac{1}{2}\sum_{i=1}^n\left\{X_{ij}-\sum_{k=0,1}\mu_{jk}\mathbbm{1}(Y_i=k)\right\}^2\right],
	\ees
where $\mathbbm{1}(\cdot)$ is the indicator function. The model from
\cite{dicker2016nonparametric} is equivalent to the model proposed
here, when $\mu_{j0}$ and $\mu_{j1}$ are independent.   Results
from analyzing the MAQC datasets using these two
classifiers (the previously proposed method with 1-dimensional
NPMLEs and the 2-dimensional NPMLE described here), along with some
other well-known and relevant classifiers, may be found in
Table \ref{tab:class}. The other classifiers we considered were:

\begin{enumerate}
	\item {\em NP EBayes w/smoothing}.  Another nonparametric empirical
	Bayes classifier proposed in \cite{greenshtein2009application}, which
	uses nonparametric smoothing to fit a univariate density to the
	$\mu_{jk}$ 
	%(Prof. Dicker: Reviewer 1(Minor points 25) ask what is $mu_j$ as we only defined $\mu_{jk}$, I think $\mu_{jk}$ is good enough here)} 
	and then employs a version of linear discriminant analysis. 
	\item {\em  Regularized LDA}. A version of $\ell^1$-regularized linear discriminant analysis, proposed in
	\citet{mai2012direct}. 
	\item {\em Logistic lasso}. $\ell^1$-penalized logistic regression fit
	using the R package \texttt{glmnet}.  
\end{enumerate}

\begin{table}[t]
	\begin{center}
		\caption{Microarray data.  Number of misclassification
			errors on test data. }
		\label{tab:class}
		\setlength{\tabcolsep}{3pt}
		\begin{tabular}{llccccccc} &&&&&& &Logistic & \\
			Dataset&
			Outcome& $n_{\mathrm{test}}$
			&2d-NPMLE&1d-NPMLE& EBayes
			& LDA
			&lasso
			\\ \hline
			Breast&Response& 100 &{\bf 15}&36&47&30&18\\
			Breast&ER status& 100 &19&40&39&{\bf 11} & {\bf 11}\\
			Myeloma&OS& 214 &30&55&100&97&{\bf 27}\\
			Myeloma&EFS& 214&34&76&100&63&{\bf 32}
		\end{tabular}		
	\end{center}
\end{table}

For each of the datasets and outcomes, the 2-dimensional NPMLE
classifier substantially outperforms the 1-dimensional NPMLE, and is very
competitive with the top performing classifiers.  Modeling dependence
between $\mu_{j0}$ and $\mu_{j1}$, as with the 2-dimensional NPMLE,
seems sensible because most of the genes are likely to have similar
expression levels across classes, i.e. $\mu_{j0}$ and $\mu_{j1}$ are
likely to be correlated.  This may be interpreted as a kind of sparsity
assumption on the data, which is prevalent in high-dimensional classification
problems.  Moreover, our proposed method involving NPMLEs
should adapt to non-sparse settings as well, since $G_0$ is allowed to
be an arbitrary bivariate distribution.  

One of the main underlying assumptions of the NPMLE-based
classification methods is that the different genes have independent
expression levels.  This is certainly {\em not} true in most applications,
but is similar in principle to a ``naive Bayes'' assumption.
Developing methods for NPMLE-based classifiers to better handle correlation
in the data may be of interest for future research.  

\section{Continuous glucose monitoring}
\label{sec:cgm}

The analysis in this section is based on blood glucose data from a
study involving 137 type 1 diabetes patients; more details on the study may be
found in \cite{hirsch2008sensor} and \cite{dicker2013continuous}.  Subjects in
the study were monitored for an average of 6 months each.  Throughout
the course of the
study, each subject wore a continuous glucose monitoring device, built around
an electrochemical glucose biosensor.  Every 5 minutes while in use,
the device records (i) a raw electrical current measurement from the
sensor (denoted $\ISIG$), which is
known to be correlated with blood glucose density, and (ii) a timestamped
estimate of blood glucose density ($\CGM$), which is based on a proprietary algorithm
for converting the available data (including the electrical current
measurements from the sensor) into blood glucose density estimates.
In addition to using the sensors, each study subject maintained a
self-monitoring routine, whereby blood glucose density was measured
approximately 4 times per day from a small blood sample extracted by
fingerstick.   Fingerstick measurements of blood glucose density are considered to be more
accurate (and are more invasive) than the sensor-based estimates
(e.g. CGM).
During the study, the
result of each fingerstick measurement was manually entered into the
continuous monitoring device at the time of measurement; algorithms
for deriving continuous sensor-based estimates of blood glucose
density, such as $\CGM$, may use the available fingerstick
measurements for calibration purposes.

In the rest of this section, we show how NPMLE-based empirical
Bayes methods can be used to improve algorithms for estimating blood glucose
density using the continuous monitoring data.  The basic idea is that
after formulating a statistical model relating blood glucose density
to $\ISIG$, we allow for the possibility that the model parameters may
differ for each subject, then use a training dataset to estimate the
distribution of model parameters across subjects (i.e. estimate
$G_0$) via nonparametric maximum likelihood.  This is illustrated for
two different statistical models in Sections \ref{sec:LM}--\ref{sec:KF}.

Throughout the analysis below, we use fingerstick measurements as a proxy
for the actual blood glucose density values.  Let $\FS_j(t)$ and
$\ISIG_j(t)$ denote the fingerstick blood glucose density and $\ISIG$
values, respectively, for the $j$-th subject at time $t$.   Recall
that $\FS_j(t)$ is measured, on average, once every 6
hours, while $\ISIG_j(t)$ is available every 5 minutes.  Let
$\F_t$ denote the $\s$-field of information available at time $t$
(i.e. all of the fingerstick and $\ISIG$ measurements taken before
time $t$, plus $\ISIG_j(t)$).  For each methodology, we use the first half of the available
data for each subject to fit a statistical model relating $\ISIG_j(t)$
to $\FS_j(t)$, then estimate each value $\FS_j(t)$ in the second half
of the data using $\widehat{\FS}_j(t)$, an estimator based on $\F_t$.
The performance of each method is measured by the MSE
on the test data, relative to the MSE of the proprietary estimator
$\CGM$.  

\subsection{Linear model}\label{sec:LM}

First we consider a basic linear regression model relating $\FS$ and $\ISIG$, 
\begin{equation}\label{LM}
\FS_j(t) = \mu_j + \beta_j \ISIG_j(t) + \s_j \epsilon_j(t),  \ \ t=t_1,...,t_n
\end{equation}
where the $\epsilon_j(t)$ are iid $N(0,1)$, $t_1,...,t_n$ is the FS measurement time and $\Th_j
= \{\mu_j,\beta_j,\log(\s_j)\} \in \R^3$ are unknown, subject-specific
parameters. In the notation of \eqref{MM}, $X_j =\FS_j\equiv\{\FS_j(t), t=t_1,..,t_n\}$ and
	\bes
	X_j\mid\Th_j &\sim&  f_0(\FS_j|\Th_j)\cr&\propto&\exp\left[-n\log\sigma_j-\frac{\sum_{i=1}^n\big\{\FS_j(t_i)-\mu_j- \beta_j \ISIG_j(t_i)\big\}^2}{2\exp(2\log\sigma_j)}\right].
	\ees
Three ways to fit \eqref{LM} are (i) using the combined model,
where $\Th_j = \Th = \{\mu,\beta,\log(\s)\}$ for all $j$, i.e. all the
subject-specific parameters are the same; (ii) the individual model,
where $\Th_1,\ldots,\Th_p$ are all estimated separately, from the
corresponding subject data; and (iii) the nonparametric mixture model, where $\Th_j
\sim G_0$ are iid draws from the $d=3$-dimensional mixing distribution
$G_0$. 
 For each of these methods, we took $\widehat{\FS}_j(t) =
\hat{\mu}_j + \hat{\beta}_j \ISIG(t)$, where $\hat{\mu}_j$
and $\hat{\beta}_j$ are the corresponding MLEs under the combined and
individual models, and, under the mixture model, $\hat{\mu}_j =
\E_{\hat{G}_{\L}}(\mu_j\mid \F_t)$ and $\hat{\beta}_j =
\E_{\hat{G}_{\L}}(\beta_j\mid \F_t)$.  Results are reported in Table
\ref{tab:BG}.

\begin{table}[t]
	\begin{center}
		\caption{Blood glucose data. MSE relative to CGM.}\label{tab:BG}
		\begin{tabular}{cccccc}
			\multicolumn{3}{c}{Linear model} &
			\multicolumn{3}{c}{Kalman filter} \\ 
			Combined & Individual & NPMLE & Combined & Individual & NPMLE
			\\ \hline 1.56 & 1.54 & {\bf 1.51} & 1.05 & 1.07 &
			{\bf 1.03}
		\end{tabular}
	\end{center}
\end{table}

\subsection{Kalman filter}\label{sec:KF}

Substantial performance improvements are possible by allowing the
model parameters relating $\FS$ and $\ISIG$ to vary with time.  In
this section we consider the Gaussian state space model (Kalman filter) 
\begin{equation}\label{KF}
\begin{array}{rcl}
\FS_j(t_i) \!\!\!\!& = &\!\!\!\! \alpha_j(t_i)\ISIG_j(t_i) + \s_j\e_j(t_{i-1}), \\
\alpha_j(t_i) \!\!\!\!& = & \!\!\!\!\alpha_j(t_{i-1}) + \tau_j
\delta_j(t_{i-1}), 
\end{array}
\end{equation}
where we assume that $\FS$ along with $\ISIG$ are observed at times $t_1,\ldots,t_n$ and $\e_j(t),\d_j(t) \sim N(0,1)$ are iid.  In \eqref{KF},
$\{\alpha_j(t)\}$ are the state variables that evolve according to a
random walk and $\Th_j = \{\log(\tau_j),\log(\s_j)\}$ are unknown
parameters.  Unlike \eqref{LM}, there is no intercept term in
\eqref{KF}; dropping the intercept term has been previously justified when using state
space models to analyze glucose sensor data
\citep[e.g.][]{dicker2013continuous}.  The parameters $\s_j,\tau_j$
control how heavily recent observations are weighted when estimating
$\alpha_j(t)$.

Similar to the analysis in Section \ref{sec:LM}, we fit \eqref{KF}
using (i) a combined model where $\Th_j = \Th$ for all $j$; (ii) an individual model where
$\Th_1,\ldots,\Th_p$ are estimated separately; and (iii) a
nonparametric mixture model,  where $\Th_j
\sim G_0$ are iid draws from a $d=2$-dimensional mixing distribution.
Under (i)--(ii), $\s_j$ and $\tau_j$ are estimated by maximum likelihood and $\widehat{\FS}_j(t_i) =
\E\{\alpha_j(t_i) \mid \F_{t_i}\}\times \ISIG_j(t_i)$, where the conditional
expectation is computed with respect to the Gaussian law governed by
\eqref{KF}, with $\hat{\s}_j$ and $\hat{\tau}_j$ replacing $\s_j$ and
$\tau_j$ (i.e. we use the Kalman filter).  For the nonparametric
mixture (iii), $\widehat{\FS}_j(t_i) =
\E_{\hat{G}_{\L}}\{\alpha_j(t_i) \mid \F_{t_i}\}\times \ISIG_j(t_i)$, where
the expectation is computed with respect to the model \eqref{KF} and the
estimated mixing
distribution $\hat{G}_{\L}$.   Results are reported in Table \ref{tab:BG}.

\subsection{Comments on results}

From Table \ref{tab:BG}, it is evident that the NPMLE mixture approach outperforms the
individual and combined methods for both the linear model and the
Kalman filter/state space model.  The Kalman filter methods perform substantially better than the
linear model, highlighting the importance of time-varying parameters
(scientifically, this is justified because the sensitivity of the
glucose sensor is known to change over time).  Note that all of the relative MSE values in
Table \ref{tab:BG} are greater than 1, indicating that $\CGM$ still outperforms
all of the methods considered here.  Somewhat more ad hoc methods for
estimating blood glucose density that do outperform CGM are
described in \cite{dicker2013continuous}; these methods (and CGM)
leverage additional data available to the continuous monitoring
system, which is not described here for the sake of simplicity.  The
methods in \cite{dicker2013continuous} are somewhat  similar to the ``combined''
Kalman filtering method from Section \ref{sec:KF}, where $\Th_j = \Th$ for all $j$; it would
be interesting to see if the performance of these methods could be further
improved by using NPMLE ideas.   

\section{Conclusion}
\label{sec:discussion}

We have proposed a flexible, practical approach to 
fitting general multivariate mixing
distributions with NPMLEs and illustrated the effectiveness of this
approach through several real data examples. Theoretically, we proved
that the support set of the NPMLE is a subset of the convex hull of MLEs when the likelihood $f_0$ comes from a class of
elliptical unimodal distributions.  We believe that this approach may
be attractive for many problems where mixture models and
empirical Bayes methods are relevant, offering both effective
performance and computational simplicity.  

\section*{Acknowledgments}
The authors are extremely grateful to the referee for their helpful and stimulating comments, leading to an improved exposition. The work of Lee H. Dicker is partially supported by NSF Grants DMS-1208785 and DMS-1454817. The work of Long Feng is supported by National Institute on Drug Abuse Grant R01 DA016750.

%% The Appendices part is started with the command \appendix;
%% appendix sections are then done as normal sections
%%\appendix

%% \label{}

%% If you have bibdatabase file and want bibtex to generate the
%% bibitems, please use
%%
\bibliographystyle{elsarticle-harv} 
\bibliography{npmleref}

%% else use the following coding to input the bibitems directly in the
%% TeX file.

%\begin{thebibliography}{00}

%% \bibitem[Author(year)]{label}
%% Text of bibliographic item

%\bibitem[ ()]{}

%\end{thebibliography}
\end{document}